\newcommand{\bisim}{\;\underline{\leftrightarrow}\;}
\newcommand{\mono}{\mathcal{L}^+_{\Box,\Diamond,\wedge,\vee}}
\newcommand{\posexcon}{\mathcal{L}^+_{\Diamond,\wedge}}
\newcommand{\pos}{\mathcal{L}^+_{\Box,\Diamond,\wedge,\vee,\top,\bot}}
\newcommand{\negmono}{\mathcal{L}^-_{\Box,\Diamond,\wedge,\vee}}
\newcommand{\emptyloop}{\circlearrowright_{\emptyset}}
\newcommand{\fullloop}{\circlearrowright_{\mathrm{Prop}}}
\begin{document}

\begin{frontmatter}
  \title{Characterising Modal Formulas with Examples}
  \author{Balder ten Cate}\footnote{Supported by the European Union’s Horizon 2020 research and innovation programme under grant MSCA-101031081.}
  \address{ILLC University of Amsterdam}
  \author{Raoul Koudijs}\footnote{Special thanks to Nick Bezhanishvili's for his insights and helpful comments}
  \address{ILLC University of Amsterdam}

  \begin{abstract}
  We initiate the study of finite characterizations and exact learnability of modal languages. A finite characterization of a modal formula w.r.t. a set of formulas is a finite set of finite models (labelled either positive or negative) which distinguishes this formula from every other formula from that set. A modal language $\mathcal{L}$ admits finite characterisations if every $\mathcal{L}$-formula  has a finite characterization w.r.t.~$\mathcal{L}$.~This definition can be applied not only to the basic modal logic $\mathbf{K}$, but to arbitrary normal modal logics.
  We show that a normal modal logic admits finite characterisations (for the full modal language) iff it is locally tabular. This shows that finite characterizations with respect to the full modal language are rare, and hence motivates the study of finite characterizations for fragments of the full modal language.
  Our main result is that the positive modal language without the truth-constants $\top$ and $\bot$ admits finite characterisations. Moreover, we show that this result is essentially optimal: finite characterizations no longer exist when the language is extended with the truth constant $\bot$ or with all but very limited forms of negation.
  \end{abstract}

  \begin{keyword}
  Finite Characterisations, Positive Languages, Exact Learning, Simulation, Dualities, Description Logic
  \end{keyword}
 \end{frontmatter}

\section{Introduction}
Every modal formula defines a possibly infinite set of pointed models that satisfy it (\textit{positive} examples), and implicitly also the set of pointed models that do not satisfy it (\textit{negative} examples). We study the question whether for a given class of modal formulas it is possible to \textit{characterize} every formula with a \textit{finite} set of positive and negative examples such that that no other formula is consistent with these. We call such sets of examples \textit{finite characterisations}. The existence of such finite characterizations is a precondition for the existence \emph{exact learning algorithms} for `reverse-engineering' a hidden goal formula from examples in Angluin's model of exact learning with membership queries \cite{AngluinQueriesConcepts}. Our interest in exact learnability is itself motivated by applications in description logic. But besides learnability, the generation of a finite exhaustive set of data examples consistent with a given logical specification, can be useful for illustration, interactive specification, and debugging purposes (e.g.,~\cite{MannilaR86} for relational database queries, \cite{AlexeCKT2011} for schema mappings, and \cite{StaworkoW15} for XML queries). The exhaustive nature of the examples is useful in these settings, as they essentially display all `ways' in which the specification can be satisfied or falsified.

In this extended abstract, we only provide a high level description of our results
and proof techniques. Detailed proofs can be found in~\cite{RaoulThesis}.

\section{Preliminaries}
Given a set of propositional variables $\mathrm{Prop}$ and a set of connectives $C\subseteq\{\wedge,\vee,\Diamond,\Box,\top,\bot\}$, let $\mathcal{L}_C[\mathrm{Prop}]$ (or simply $\mathcal{L}_C$ when $\mathrm{Prop}$ is clear from context) denote the collection of all modal formulas generated from literals (i.e. positive or negated propositional variables) from $\mathrm{Prop}$, using the connectives in $C$. Note that all such formulas are in negation normal form, i.e. negations may only occur in front of propositional variables. Further,
for any modal fragment $\mathcal{L}$ as defined above,
$\mathcal{L}^+$ and $\mathcal{L}^-$ denote the set of positive, respectively negative $\mathcal{L}$ formulas, where
a formula $\varphi$ is \textit{positive} if no $p\in var(\varphi)$ occurs negated, and \textit{negative} if all $p\in var(\varphi)$ occur only negated.
We will use \textit{modal language} to refer to any such fragment. 
By the \emph{full modal language} we will mean $\mathcal{L}_{\Box,\Diamond,\wedge,\vee,\top,\bot}[\mathrm{Prop}]$.

For a modal formula $\varphi$, let $var(\varphi)$ denote the set of variables occurring in $\varphi$ and $d(\varphi)$ its \textit{modal depth}, i.e. the nesting depth of $\Diamond$'s and $\Box$'s in $\varphi$.

A \textit{normal modal logic} is a collection of modal formulas containing all instances of the $K$-axiom $\Box(\varphi\to\psi)\to(\Box\varphi\to\Box\psi)$ and closed under uniform substitution, modus ponens and generalisation.
A (Kripke) \textit{model} is a triple $M=(dom(M),R,v)$ where $dom(M)$ is the a set of `possible worlds', $R\subseteq dom(M)\times dom(M)$ a binary `accessibility' relation and a valuation $V:\mathrm{Prop}\to\mathcal{P}(W)$. A \textit{pointed model} is a pair $M,s$ of a Kripke model $M$ together with a state $s\in dom(M)$. A (Kripke) \textit{frame} is a model without its valuation. 
\looseness=-1

\section{Finite Characterizations}
First, we give a formal definition of finite characterizations of modal formulas.

\begin{definition}{(Finite characterizations)} 
\label{def:characterization}
A \textit{finite characterization} of a formula $\varphi\in\mathcal{L}[\mathrm{Prop}]$ w.r.t. $\mathcal{L}[\mathrm{Prop}]$ is a pair of finite sets of finite pointed models $\mathbb{E}=(E^+,E^-)$ such that (i) $\varphi$ \textit{fits} $(E^+,E^-)$, i.e. $E,e\models\varphi$ for all $(E,e)\in E^+$ and $E,e\not\models\varphi$ for all $(E,e)\in E^-$ and (ii) $\varphi$ is the only formula in $\mathcal{L}[\mathrm{Prop}]$ which fits $(E^+,E^-)$, i.e. if $\psi\in\mathcal{L}[\mathrm{Prop}]$ satisfies condition (i) then $\varphi\equiv\psi$. A modal language $\mathcal{L}$ is \textit{finitely characterizable} if for every finite set of propositional variables $\mathrm{Prop}$, every $\varphi\in\mathcal{L}[\mathrm{Prop}]$ with has a finite characterization w.r.t. $\mathcal{L}[\mathrm{Prop}]$.
\end{definition}

Thus if $(E^+,E^-)$ is a finite characterization of a formula $\varphi\in\mathcal{L}[\mathrm{Prop}]$ w.r.t. $\mathcal{L}[\mathrm{Prop}]$, then for every $\psi\in\mathcal{L}[\mathrm{Prop}]$ with $\varphi\not\equiv\psi$, $E^+$ contains a finite model of $\varphi\wedge\neg\psi$ or $E^-$ contains a finite model of $\neg\varphi\wedge\psi$. For example, the formula $p\land q$ has a finite characterization w.r.t. 
$\mathcal{L}^+_{\wedge}[\mathrm{Prop}]$ with $\mathrm{Prop}=\{p,q,r\}$,
namely
$(\{\cdot_{p,q}\},\{\cdot_{p},\cdot_{q}\})$, where ``$\cdot_P$'' is the single point model where all $p\in P$ are true.

Our motivation for studying finite characterizations, comes from 
\emph{computational learning theory}. Specifically, finite characterizability is a  necessary precondition for \emph{exact learnability with membership queries} in Dana Angluin's interactive model of exact learning \cite{AngluinQueriesConcepts}. 
In our context, exact learnability with membership corresponds to a setting in which
the learner has to identify a formula by asking question to an oracle, where
each question is of the form ``is the formula true or false in pointed model $(M,w)$?''
This can also be viewed as a `reverse engineering' task, where a formula has to be identified based on its behaviour on only a finite set of models.
Exact learnability has recently gained a renewed interest in the description logic literature. We comment more on the connection with description logic in Section~\ref{sec:discussion}.

Our starting observation is:

\begin{theorem}\label{thm:forcingheight}
The full modal language is not finitely characterizable.
\end{theorem}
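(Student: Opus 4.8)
Since finite characterizability requires \emph{every} formula (over some finite $\mathrm{Prop}$) to have a finite characterization, the plan is to exhibit a single formula of the full modal language that provably has none; the convenient choice is the truth-constant $\bot$. Suppose, for contradiction, that $(E^+,E^-)$ is a finite characterization of $\bot$ w.r.t.\ $\mathcal{L}_{\Box,\Diamond,\wedge,\vee,\top,\bot}[\mathrm{Prop}]$. Since no pointed model satisfies $\bot$, clause~(i) of Definition~\ref{def:characterization} forces $E^+=\emptyset$, so the data reduce to a finite set $E^-$ of finite pointed models. By clause~(ii) it then suffices to produce a \emph{satisfiable} formula of the full modal language that is false in every member of $E^-$: such a formula fits $(\emptyset,E^-)$ yet is not equivalent to $\bot$, the desired contradiction.

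To build this formula I would use the standard theory of modal $n$-types. Write $\equiv_n$ for modal equivalence up to modal depth $n$, and recall that (for finite $\mathrm{Prop}$) each $\equiv_n$-class is definable by a formula of modal depth $\le n$, that there are only finitely many such classes, but that their number grows without bound: for instance, the irreflexive paths $P_\ell=(0\to 1\to\dots\to\ell)$ pointed at $0$ satisfy $P_\ell\models\Diamond^\ell\top$ while $P_\ell\not\models\Diamond^{\ell'}\top$ for $\ell<\ell'$, so $P_0,\dots,P_n$ occupy $n+1$ distinct $\equiv_n$-classes. Take $n:=|E^-|$. Since $E^-$ meets at most $|E^-|$ of these classes whereas $P_0,\dots,P_n$ meet $|E^-|+1$ of them, some $P_j$ has an $n$-type realized by no member of $E^-$. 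For each $(M,w)\in E^-$ we have $P_j\not\equiv_n(M,w)$, so there is a formula of modal depth $\le n$ separating them; as the full modal language is closed under negation up to equivalence (swap $\wedge/\vee$, $\Box/\Diamond$, $\top/\bot$ and $p/\neg p$), we may take this to be a negation-normal-form formula $\theta_{M,w}$ with $P_j\models\theta_{M,w}$ and $(M,w)\not\models\theta_{M,w}$. Set $\psi:=\bigwedge_{(M,w)\in E^-}\theta_{M,w}$.

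Then $\psi$ is a formula of $\mathcal{L}_{\Box,\Diamond,\wedge,\vee,\top,\bot}[\mathrm{Prop}]$; it is false at every $(M,w)\in E^-$, because the conjunct $\theta_{M,w}$ already fails there, so $\psi$ fits $(E^+,E^-)=(\emptyset,E^-)$; and $P_j\models\psi$, so $\psi$ is satisfiable, hence $\psi\not\equiv\bot$. This contradicts clause~(ii) of Definition~\ref{def:characterization}, proving the theorem. The same construction dualized --- now $E^-=\emptyset$ is forced and one takes $\psi:=\bigvee_{(M,w)\in E^+}\theta_{M,w}$ --- shows $\top$ has no finite characterization either. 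I expect no real obstacle here: the only points needing care are the invoked facts about $\equiv_n$ (a routine Ehrenfeucht--Fra\"iss\'e observation about modal games) and the closure of the language under negation. It is worth stressing that the argument turns on modal depth being unbounded --- equivalently, on $\mathbf{K}$ failing to be locally tabular --- which foreshadows the promised equivalence between admitting finite characterizations and local tabularity.
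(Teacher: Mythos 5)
Your proof is correct, and it establishes the theorem via the same underlying obstruction the paper exploits --- a finite set of examples can only ``see'' finitely many depth-$n$ behaviours, while the language realises unboundedly many (the paths $P_0,\dots,P_n$), so a formula can be smuggled past the examples. The execution differs, though. The paper targets $\Box\bot$ and uses the concrete height-pinning formula $\Box^{n+1}\bot\wedge\Diamond^n\top$, observing that the finite heights occurring in $(E^+,E^-)$ are bounded, so for large $n$ the disjunction $(\Box^{n+1}\bot\wedge\Diamond^n\top)\vee\Box\bot$ fits the examples but is not equivalent to $\Box\bot$. You instead target $\bot$ (which conveniently forces $E^+=\emptyset$) and run a pigeonhole argument over modal $n$-types with $n=|E^-|$, assembling the fitting-but-inequivalent formula as a conjunction of abstract separating formulas $\theta_{M,w}$. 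What the paper's version buys is complete explicitness --- the counterexample formula is written down in closed form with no appeal to the theory of $\equiv_n$ --- whereas your version buys generality: the type-counting argument isolates exactly the failure of local tabularity as the culprit, which directly foreshadows (and essentially reproves one direction of) the paper's subsequent theorem that a normal modal logic is finitely characterizable iff it is locally tabular. All the facts you invoke (definability and finite index of $\equiv_n$-classes for finite $\mathrm{Prop}$, closure of the negation-normal-form language under negation via dualisation) are standard and correctly applied, and your handling of the degenerate case $E^+=\emptyset$ is sound.
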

\begin{proof}
It suffices to give one counterexample, so suppose that e.g. $\varphi=\Box\bot$ had a finite characterization $(E^+,E^-)$ w.r.t.~the full modal language. 
Observe that for each $n$, $M,s\models\Box^{n+1}\bot\wedge\Diamond^n\top$ iff $height(M,s)=n$, where the \emph{height} of a pointed model $M,s$ is the length of the longest path in $M$ starting at $s$, or $\infty$ if there is no finite upper bound.
Every finite characterizations can only contain pointed models up to some bounded height, or one must have height $\infty$. In either case, for large enough $n$, 
$\varphi=(\Box^{n+1}\bot\wedge\Diamond^n\top)\vee\Box\bot$ is falsified on all negative examples in $E^-$, as $\Box\bot$ is also falsified on all these examples by hypothesis. Moreover, $\varphi$ is also true on all positive examples in $E^+$ since it contains $\Box\bot$ as a disjunct. However, clearly $\varphi\not\equiv\Box\bot$, contradicting our assumption that $(E^+, E^-)$ is a finite characterisation of $\Box\bot$.
\end{proof}

In fact, by a variation of the same argument, we can show that \textit{no} modal formula has a finite characterization w.r.t. the full modal language. 

Theorem~\ref{thm:forcingheight} raises two questions, namely: \emph{do finite
characterizations exist in other modal logics than $\mathbf{K}$}, 
and \emph{which fragments of modal logic admit finite characterizations}.
We address each of these two questions next.

We first generalize Definition~\ref{def:characterization} as follows (whereby Theorem \ref{thm:forcingheight} becomes a result about the special case of the basic normal modal logic $\mathbf{K}$): a finite characterization of a modal formula $\varphi$ with $var(\varphi)\subseteq\mathrm{Prop}$ w.r.t.~a normal modal logic $L$ is a finite set $(E^+,E^-)$ of finite pointed models based on $L$ frames such that (i) $\varphi$ fits $(E^+,E^-)$ and (ii) if $\psi$ with $var(\psi)\subseteq\mathrm{Prop}$ fits $(E^+,E^-)$ then $\varphi\equiv_L\psi$, where $\varphi\equiv_L\psi$ iff $\varphi\leftrightarrow\psi\in L$. We say that a normal modal logic $L$ is finitely characterizable if for every finite set $\mathrm{Prop}$, every modal $\varphi$ with $var(\varphi)\subseteq\mathrm{Prop}$ has a finite characterization w.r.t. $L$. We can give a complete characterization over which modally definable frame classes the full modal language is finitely characterizable.

It turns out that only very few normal modal logics are uniquely characterizable.
A normal modal logic $L$ is \textit{locally tabular} if for every finite set $\mathrm{Prop}$ of propositional variables, there are only finitely many formulas $\varphi$ with $var(\varphi)\subseteq\mathrm{Prop}$ up to $L$-equivalence. 

\begin{theorem}
A normal modal logic $L$ is finitely characterizable iff it is locally tabular.
\end{theorem}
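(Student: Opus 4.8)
The plan is to prove the two implications separately: ``locally tabular $\Rightarrow$ finitely characterizable'' rests on the algebraic reading of local tabularity, and its converse on a counting argument using bounded-depth type formulas, in the spirit of the proof of Theorem~\ref{thm:forcingheight}.

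\emph{Locally tabular $\Rightarrow$ finitely characterizable.} Fix a finite $\mathrm{Prop}$. Local tabularity says precisely that the Lindenbaum--Tarski algebra $\mathbf{A}$ of $L$ over $\mathrm{Prop}$, whose elements are the formulas over $\mathrm{Prop}$ modulo $\equiv_L$, is finite. A finite modal algebra is the complex algebra of a finite frame; and since $\mathbf{A}$ satisfies the equational theory of $L$, that frame $F_0$ is an $L$-frame. Equip $F_0$ with the valuation $V_0$ sending each $p\in\mathrm{Prop}$ to the element $[p]$ of $\mathbf{A}$; then in the finite $L$-model $M_0=(F_0,V_0)$ the extension of any formula $\varphi$ over $\mathrm{Prop}$ is exactly the element $[\varphi]\in\mathbf{A}$, so two formulas over $\mathrm{Prop}$ are $L$-equivalent iff they agree at every world of $M_0$. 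Now, given any $\varphi$ with $var(\varphi)\subseteq\mathrm{Prop}$, put $E^+=\{(M_0,w) : M_0,w\models\varphi\}$ and $E^-=\{(M_0,w) : M_0,w\not\models\varphi\}$. These are finite sets of finite pointed $L$-models, $\varphi$ fits $(E^+,E^-)$, and any $\psi$ over $\mathrm{Prop}$ that fits $(E^+,E^-)$ agrees with $\varphi$ at every world of $M_0$ and hence is $L$-equivalent to $\varphi$. So $(E^+,E^-)$ is a finite characterization of $\varphi$ w.r.t.\ $L$.

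\emph{Finitely characterizable $\Rightarrow$ locally tabular.} We argue by contraposition. Suppose $L$ is not locally tabular (in particular $L$ is consistent), witnessed by a finite $\mathrm{Prop}$ over which there are infinitely many formulas up to $\equiv_L$. Let $r_k$ be the number of $k$-bisimulation classes of pointed $L$-models; each $r_k$ is finite and $r_k\le r_{k+1}$, and since formulas of modal depth $\le k$ are $k$-bisimulation invariant while any two formulas agreeing on all pointed $L$-models are $L$-equivalent, there are at most $2^{r_k}$ formulas of depth $\le k$ up to $\equiv_L$. As every formula has some finite depth, these numbers tend to infinity, and hence so does $r_k$. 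Now suppose towards a contradiction that $\top$ has a finite characterization $(E^+,E^-)$ w.r.t.\ $L$. Since $\top$ holds everywhere, $E^-=\emptyset$, and $E^+=\{(M_1,s_1),\dots,(M_p,s_p)\}$ is a finite set of finite pointed $L$-models. Choose $m$ with $r_m>p$, and for each $i\le p$ let $\tau_i$ be a formula over $\mathrm{Prop}$ (of modal depth $m$) that characterizes the $m$-bisimulation class of $(M_i,s_i)$. Then $\psi:=\tau_1\vee\cdots\vee\tau_p$ fits $(E^+,\emptyset)$, since each $(M_i,s_i)$ satisfies $\tau_i$; but $\psi$ is false at any pointed $L$-model whose $m$-bisimulation class differs from all those of the $(M_i,s_i)$, and such a model exists because $r_m>p$, so $\psi\not\equiv_L\top$. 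This contradicts $(E^+,E^-)$ being a characterization of $\top$. Hence $\top$ has no finite characterization w.r.t.\ $L$, so $L$ is not finitely characterizable.

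\emph{On the difficulty.} The substantive ingredients are (i) that a finite modal algebra is the complex algebra of a finite frame, which supplies a finite \emph{universal} $L$-model in the $\Leftarrow$ direction without assuming that $L$ is Kripke-complete, and (ii) the existence of bounded-depth type formulas defining $k$-bisimulation classes. Point (i), and making the universal-model argument go through uniformly for \emph{all} normal modal logics (including Kripke-incomplete ones), is where I expect the real care to be needed; one should in particular not try to use characteristic formulas up to full bisimulation, which need not exist as modal formulas, whereas the $k$-bisimulation types used above are exactly what the counting argument requires.
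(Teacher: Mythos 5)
The paper only states this theorem and defers all proofs to the cited thesis, so there is no in-paper proof to compare against; judged on its own terms, your ``locally tabular $\Rightarrow$ finitely characterizable'' direction is correct and clean. The finite Lindenbaum--Tarski algebra over $\mathrm{Prop}$ is the complex algebra of a finite frame, that frame validates $L$ (so the examples are based on $L$-frames, as the definition requires), the resulting finite model $M_0$ is universal in the sense that $L$-equivalence of $\mathrm{Prop}$-formulas coincides with agreement at every world of $M_0$, and listing all worlds of $M_0$ as positive/negative examples characterizes any $\varphi$. This argument correctly avoids any Kripke-completeness assumption, as you intended.

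The converse direction has one genuine gap. In the counting step you assert that ``any two formulas agreeing on all pointed $L$-models are $L$-equivalent'' in order to bound the number of $\equiv_L$-classes of depth-$\le k$ formulas by $2^{r_k}$ and conclude $r_k\to\infty$. That assertion is precisely Kripke completeness of $L$ (for $\mathrm{Prop}$-formulas), and it fails for arbitrary normal modal logics: the Thomason/Fine/van~Benthem examples provide $\varphi\notin L$ valid on every $L$-frame, hence agreeing with $\top$ at every pointed $L$-model while $\varphi\not\equiv_L\top$. Without it, $\equiv_L$ may be strictly finer than semantic equivalence over pointed $L$-models, so infinitely many $\equiv_L$-classes need not yield infinitely many $k$-bisimulation types, and $r_k\to\infty$ does not follow. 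The repair is short but should be explicit, as a case distinction: if some $\varphi\not\equiv_L\psi$ over $\mathrm{Prop}$ agree at every pointed $L$-model, then no finite set of examples based on $L$-frames can separate them, so $\varphi$ has no finite characterization and $L$ is not finitely characterizable outright; otherwise $\equiv_L$ coincides with semantic equivalence over pointed $L$-models, your $2^{r_k}$ bound is legitimate, and the remainder of your argument (the type formulas $\tau_i$, the disjunction $\psi$ fitting $(E^+,\emptyset)$ yet refuted at a pointed $L$-model realizing a fresh $m$-bisimulation type, and soundness giving $\psi\not\equiv_L\top$) goes through, including the degenerate case $E^+=\emptyset$ via consistency of $L$.
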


In other words the full modal language is only finitely characterizable in the degenerate case where there are only finitely many formulas to distinguish from (up to equivalence). This result motivates the investigation of finite characterizability for modal fragments. 
Specifically, inspired by previous work on finite characterizability of the positive existential fragment of first order logic \cite{dalmautencateCQ}, we consider positive fragments of the full modal language. 

In the remainder of this section, we only consider again the modal logic $\mathbf{K}$. The proof of Theorem~\ref{thm:forcingheight} can easily be modified to show the following:%
\footnote{It suffices to replace $\top$ by a fresh propositional variable $q$ in the proof of Theorem~\ref{thm:forcingheight}.}

\begin{theorem}\label{thm:forcingheightforposminustop}
$\mathcal{L}^+_{\Box,\Diamond,\wedge,\vee,\bot}$ is not finitely characterizable.
\end{theorem}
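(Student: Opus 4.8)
The plan is to adapt the proof of Theorem~\ref{thm:forcingheight} exactly as the footnote suggests: since the truth constant $\top$ is unavailable in $\mathcal{L}^+_{\Box,\Diamond,\wedge,\vee,\bot}$, I replace it by a propositional variable $q$. Fix $\mathrm{Prop}=\{q\}$ and take the counterexample formula to be $\varphi=\Box\bot$, which still lies in $\mathcal{L}^+_{\Box,\Diamond,\wedge,\vee,\bot}[\mathrm{Prop}]$ (it uses no variables, hence is vacuously positive). Suppose towards a contradiction that $\varphi$ has a finite characterization $(E^+,E^-)$ w.r.t.~this language.

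Since $E^-$ is a finite set of finite pointed models, let $N$ be the maximum of $height(E,e)$ over those $(E,e)\in E^-$ of finite height (with $N=0$ if there are none), and set $n=N+1$. The rival formula will be $\psi=(\Box^{n+1}\bot\wedge\Diamond^n q)\vee\Box\bot$, which is built only from $\bot$, the positive literal $q$ and the connectives $\Box,\Diamond,\wedge,\vee$, and so belongs to $\mathcal{L}^+_{\Box,\Diamond,\wedge,\vee,\bot}[\mathrm{Prop}]$. I then verify the two facts needed for a contradiction. First, $\psi$ fits $(E^+,E^-)$: every $(E,e)\in E^+$ satisfies $\Box\bot$ since $\varphi$ fits, and $\Box\bot$ is a disjunct of $\psi$; every $(E,e)\in E^-$ falsifies $\Box\bot$, and it also falsifies $\Box^{n+1}\bot\wedge\Diamond^n q$ — if $height(E,e)=\infty$ then, the model being finite, a cycle is reachable from $e$ and so there is a path of length $n+1$, falsifying $\Box^{n+1}\bot$; if $height(E,e)<\infty$ then $height(E,e)\le N<n$, so there is no path of length $n$ from $e$ at all and $\Diamond^n q$ fails. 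Second, $\psi\not\equiv\Box\bot$, witnessed by the single-path pointed model $w_0\mathrel{R}w_1\mathrel{R}\cdots\mathrel{R}w_n$ with $q$ true exactly at $w_n$: it satisfies $\Box^{n+1}\bot\wedge\Diamond^n q$ and hence $\psi$, but not $\Box\bot$. This contradicts clause (ii) of Definition~\ref{def:characterization}.

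I do not anticipate a real obstacle; the whole argument is a transcription of the proof of Theorem~\ref{thm:forcingheight} with $\top$ replaced by $q$. The only step requiring mild care is the case analysis on negative examples, specifically making precise that in a \emph{finite} model ``height $\infty$'' forces a reachable cycle and therefore paths of every sufficiently large length, so that $\Box^{n+1}\bot$ genuinely fails there for our choice of $n$.
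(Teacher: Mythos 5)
Your proof is correct and is exactly the argument the paper intends: the footnote to Theorem~\ref{thm:forcingheightforposminustop} says to rerun the proof of Theorem~\ref{thm:forcingheight} with $\top$ replaced by a fresh variable $q$, and your rival formula $(\Box^{n+1}\bot\wedge\Diamond^n q)\vee\Box\bot$, choice of $n$ exceeding the finite heights in $E^-$, and cycle argument for the infinite-height case are precisely that adaptation, carried out with more care than the original sketch.
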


On the other hand, based on results in \cite{dalmautencateCQ}, we can show that:

\begin{theorem}[From~\cite{dalmautencateCQ}]\label{thm:posexcon}
$\posexcon$ is finitely characterizable. Indeed, given a formula in $\posexcon$,
we can construct a finite characterization in polynomial time. 
\end{theorem}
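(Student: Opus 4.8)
The plan is to reduce to the known result on unique characterizability of conjunctive queries. Fix a finite $\mathrm{Prop}$ and work over the relational signature with one binary relation $R$, one unary relation $P_p$ for each $p\in\mathrm{Prop}$, and one constant $c$. A formula $\varphi\in\posexcon[\mathrm{Prop}]$ parses as a finite tree, and I would translate it into the conjunctive query $q_\varphi$ whose body is that tree with the root identified with $c$; note that $q_\varphi$ has no cycle at all, so in particular it is \emph{c-acyclic}. Finite pointed Kripke models over $\mathrm{Prop}$ are exactly the finite structures over this signature (reading $c$ as the evaluation point), and a routine induction on $\varphi$ shows $M,s\models\varphi$ iff the structure $M$ with $c^{M}=s$ satisfies $q_\varphi$. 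Hence a formula $\psi\in\posexcon[\mathrm{Prop}]$ fits a pair $(E^+,E^-)$ of finite pointed models iff $q_\psi$ fits the corresponding pair of finite structures, and $q_\varphi\equiv q_\psi$ implies $\varphi\equiv\psi$.

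Next I would invoke \cite{dalmautencateCQ}: every c-acyclic conjunctive query has a finite characterization with respect to the class of all conjunctive queries, computable in polynomial time — the construction uses a single positive example (essentially the canonical structure of the query) together with a polynomial-size family of negative examples built from a polynomial \emph{frontier} of the query, i.e.\ a finite set of strictly more general CQs through which every proper generalization factors. Applying this to $q_\varphi$ and translating the resulting example structures back into finite pointed Kripke models yields a pair $(E^+,E^-)$, computable in polynomial time from $\varphi$. By the correspondence above, $\varphi$ fits $(E^+,E^-)$; and if some $\psi\in\posexcon[\mathrm{Prop}]$ also fits $(E^+,E^-)$, then $q_\psi$ fits the characterization of $q_\varphi$, so $q_\psi\equiv q_\varphi$ and therefore $\psi\equiv\varphi$. (Characterizing against all CQs is more than we need here, which is harmless.) This proves the theorem.

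The translation and the passage between Kripke models and relational structures are routine; the only mild points of care are treating the evaluation point as a constant so that c-acyclicity genuinely applies, and observing that the negative examples produced by the CQ construction, when read back as Kripke models, really do falsify $\varphi$ — which is immediate from the satisfaction-versus-homomorphism equivalence. The substance is entirely inside \cite{dalmautencateCQ}, namely the existence and polynomial size of frontiers for c-acyclic CQs. If instead one wants a self-contained modal proof, one builds such a frontier by recursion on the canonical tree model $(M_\varphi,r_\varphi)$, taking as negative examples the canonical models of the frontier formulas; the main obstacle there is that a simulation $(M_\psi,r_\psi)\simu(M_\varphi,r_\varphi)$ may fold several branches of $M_\psi$ onto one branch of $M_\varphi$, so beyond the obvious generalization steps (delete a subtree, weaken a node's propositional label, recursively weaken a subtree) the frontier must also include ``node-splitting'' steps, and the work is to show these suffice while keeping the frontier polynomial.
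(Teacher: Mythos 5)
Your proposal is correct and follows essentially the same route as the paper: the paper likewise derives this theorem by observing that the standard translation sends $\posexcon$-formulas to c-acyclic conjunctive queries (with the evaluation point as a constant) and then invokes the polynomial-time characterization of c-acyclic CQs from \cite{dalmautencateCQ}, built on the frontier/duality results of \cite{FoniokNT08}. Your additional remarks on the back-translation and on a self-contained modal construction are sound but not needed for the argument the paper intends.
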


More precisely, it was shown in \cite{dalmautencateCQ} (building on results in~\cite{FoniokNT08}) that finite characterizations can be constructed in polynomial time for ``c-acyclic conjunctive queries'', a fragment of first-order logic that includes the standard translations of $\posexcon$-formulas. 

Our main result here extends Theorem~\ref{thm:posexcon} by showing that $\mono$ is finitely characterizable.

\begin{theorem}\label{thm:main}
$\mono$ is finitely characterizable.
\end{theorem}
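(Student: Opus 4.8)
The plan is to extend the simulation/homomorphism-duality argument behind Theorem~\ref{thm:posexcon} so that it also copes with the universal modality $\Box$. The backbone is the model theory of $\mono$: every $\mono$-formula is preserved under directed simulations $\simu$ (valuations may only grow, $\Diamond$ requires a forth-clause and $\Box$ a back-clause), and two finite pointed models satisfy the same $\mono$-formulas of modal depth at most $n$ exactly when there are $n$-round simulations between them in both directions. Since $\mathrm{Prop}$ is finite, for each $n$ there are only finitely many $n$-round simulation types, and every $\mono$-formula of modal depth $d$ is equivalent to a finite disjunction of characteristic formulas of $d$-round types. Concretely I would first put $\varphi$ into a layered normal form --- distributing $\Diamond$ over $\vee$ and $\Box$ over $\wedge$ --- so that it becomes a disjunction of ``conjunctive'' $\mono$-formulas, from each of which one reads off a finite tree-shaped \emph{canonical model}.

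Next I would build $E^{+}$ from these canonical models, but with their boundary \emph{capped by cyclic gadgets} rather than left as plain leaves. At a depth-$d$ node carrying a $\Box$-obligation $\Box\beta$ one attaches a successor satisfying $\beta$ followed by a reflexive world making every propositional variable false; the key point is that, because $\mono$ has no truth constants, such a world satisfies \emph{no} $\mono$-formula whatsoever, so this cap makes all deep $\Box$- and $\Diamond$-assertions false beyond modal depth $d$ (rather than vacuously true, as at a dead end), while leaving the behaviour at depth $\le d$ --- hence the truth of $\varphi$ --- unchanged. This is exactly what refutes, in a single finite example, the spurious stronger competitors of unbounded modal depth such as $\varphi\wedge\Box^{k}q$ or $\varphi\wedge\Box^{k}\Diamond q$. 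One also adds the canonical models of the ``near-misses'' required to refute competitors that merely drop a conjunct (e.g.\ a $\Diamond q$ nested under a $\Box$), and dually one takes $E^{-}$ to be a finite \emph{frontier} of $\varphi$: its minimal weakenings --- drop a literal, delete or merge a $\Diamond$-witness, weaken a $\Box$ --- again capped with cyclic gadgets so that finitely many finite models suffice. All these sets are finite because $\mathrm{Prop}$ is finite and the normal form has bounded modal depth.

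The correctness proof then runs in the familiar two cases, for a putative $\psi\in\mono[\mathrm{Prop}]$ that fits $(E^{+},E^{-})$ with $\psi\not\equiv\varphi$. If $\varphi\not\models\psi$, I would argue by induction on the structure of $\psi$ --- using that the positive examples are as permissive as $\varphi$ allows below depth $d$ and are capped by $\mono$-inert cyclic worlds above it --- that $\psi$ is already false on one of the positive examples. If instead $\psi\not\models\varphi$, then by the finite model property there is a finite model of $\psi\wedge\neg\varphi$, which simulates into some frontier element of $E^{-}$; by preservation that element is also a model of $\psi$, contradicting that $\psi$ is false throughout $E^{-}$. I would organise the whole construction as an induction on the modal depth of $\varphi$, assembling the canonical models and frontier of $\varphi$ from those of the depth-$(d-1)$ components of its normal form. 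The hard part --- and the real content of the theorem --- is precisely that one finite characterisation must separate $\varphi$ from competitors of \emph{arbitrarily large} modal depth, and that the frontier stays finite despite the presence of $\Box$; both are made possible only by the cyclic capping, and it is exactly here that the absence of $\top$ and $\bot$ is indispensable. With $\bot$ (or $\top$, or non-trivial negation) one could express ``height exactly $k$'', and the forcing-height argument of Theorems~\ref{thm:forcingheight} and~\ref{thm:forcingheightforposminustop} shows that no finite set of examples can then separate $\varphi$ from all of its competitors.
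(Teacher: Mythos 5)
Your proposal follows essentially the same route as the paper: positive examples obtained from canonical tree models of a normal form of $\varphi$ capped with a reflexive all-false world (which, in the absence of $\top$ and $\bot$, satisfies no $\mono$-formula), a finite frontier of maximally permissive non-models as negative examples, and correctness via a preservation theorem yielding a duality in a category of simulations. The paper packages your ``cyclic capping'' into an explicit notion of \emph{weak simulation} --- simulations whose forth and back clauses may skip worlds bisimilar to the reflexive empty-valuation loop, respectively the reflexive full-valuation loop --- and proves preservation of $\mono$ under it; your sketch leaves that weakening implicit (your plain directed simulations would not let dead-end models map into frontier elements), but the underlying idea is the same.
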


Theorem~\ref{thm:forcingheightforposminustop} above shows that this is essentially optimal; we leave open the question whether the fragment without $\bot$ but with $\top$ is finitely characterizable. 

In the rest of this section, we outline the ideas behind the proof
of Theorem~\ref{thm:main}. A key ingredient is the novel notion of \emph{weak simulation}, which we obtain by weakening the back and forth clauses of the \emph{simulations} studied in \cite{KurtoninaDeRijke}. Simulations are themselves a  weakening of bisimulations. It was shown in \cite{KurtoninaDeRijke} that $\pos$ is characterized by preservation under simulations.\looseness=-1

A \textit{weak simulation} between two pointed models $(M,s),(M',s')$ is a relation $Z\subseteq M\times M'$ such that for all $(t,t')\in Z$:
\begin{align*}
&(\mathrm{atom})         &&M,s\models p\quad\mathrm{implies}\quad M',s'\models p\\
&(\mathrm{forth}')               &&\mathrm{If}\;R^{M}tu,\;\mathrm{either}\;M,u\bisim\emptyloop\\[-1mm]&&&\text{or there is a}\;u'\;\mathrm{with}\;R^{M'}t'u'\;\mathrm{and}\;(u,u')\in Z\\
&(\mathrm{back}')               &&\mathrm{If}\;R^{M'}t'u',\;\mathrm{either}\;M',u'\bisim\fullloop\\[-1mm]&&&\text{or there is a}\;u\;\mathrm{with}\;R^{M}tu\;\mathrm{and}\;(u,u')\in Z
\end{align*}
where $\emptyloop$ denotes the single reflexive point with empty valuation, $\fullloop$ denotes the single reflexive point with full valuation and $\bisim$ denotes bisimulation. If such $Z$ exists, we say that $M',s'$ \emph{weakly simulates} $M,s$. The crucial weakening is witnessed by the fact that the deadlock model, i.e. the single point with no successors, weakly simulates $\emptyloop$, but does not simulate it.

Because weak simulations are closed under relational composition, which is associative, the collection of pointed models and weak simulations forms a category with $\emptyloop$ and $\fullloop$ as weak initial and final objects, respectively. 

\begin{theorem}\label{thm:preservation}
$\mono$ is preserved under weak simulations.
\end{theorem}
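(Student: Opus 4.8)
The plan is to prove Theorem~\ref{thm:preservation} by induction on the structure of a formula $\varphi \in \mono$, establishing the statement that whenever $Z$ is a weak simulation from $(M,s)$ to $(M',s')$ with $(s,s') \in Z$, and $M,s \models \varphi$, then $M',s' \models \varphi$. The base case of literals is immediate: positive formulas in $\mono$ use only positive literals, so the $(\mathrm{atom})$ clause of a weak simulation delivers exactly what is needed. The inductive steps for $\wedge$ and $\vee$ are routine, since the semantics of these connectives quantify over the same point $s$ and the relation $Z$ is fixed.

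The interesting cases are $\Diamond$ and $\Box$. For $\Diamond\psi$: if $M,s\models\Diamond\psi$ there is $u$ with $R^M s u$ and $M,u\models\psi$. Applying $(\mathrm{forth}')$ to $(s,s')\in Z$ and $R^M s u$, we get two cases. Either there is $u'$ with $R^{M'} s' u'$ and $(u,u')\in Z$, and then the induction hypothesis gives $M',u'\models\psi$, hence $M',s'\models\Diamond\psi$. Or $M,u \bisim \emptyloop$; but then I must argue that $M,u\models\psi$ forces something useful — here the key subclaim is that every $\mono$-formula that holds at $\emptyloop$ also holds at \emph{every} pointed model, in particular at any successor $u'$ of $s'$ (and $s'$ has a successor because $(\mathrm{forth}')$ applied to the reflexive loop at $u$, which is bisimilar, would otherwise need handling — more carefully, I should phrase the subclaim so that it covers the degenerate situation). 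Concretely, I expect to prove by a separate induction: if $M,t\models\emptyloop \bisim (M,t)$ then $M,t$ weakly simulates every pointed model, and dually $\fullloop$ is weakly simulated by every pointed model; combined with Theorem~\ref{thm:preservation} itself this is circular, so instead the clean subclaim is: \emph{$\emptyloop$ satisfies exactly the $\mono$-formulas that are valid (true everywhere), and $\fullloop$ falsifies exactly those that are unsatisfiable} — but since $\mono$ has no $\bot$, in fact every $\mono$-formula is satisfiable, so $\fullloop$ satisfies \emph{every} $\mono$-formula, and correspondingly the only $\mono$-formulas true at $\emptyloop$ are the ``trivial'' ones. This needs to be pinned down by induction on $\mono$-formulas: one shows $M',s'\models\varphi$ for every pointed model whenever $\emptyloop \models \varphi$, using that $\emptyloop$ weakly-simulates-into nothing special but that $\Diamond\psi$ at $\emptyloop$ means $\psi$ holds at $\emptyloop$ again, and $\Box\psi$ at $\emptyloop$ likewise.

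So the actual structure I would use is a simultaneous/nested induction. First, as a lemma, prove: for all $\varphi\in\mono$, if $\emptyloop\models\varphi$ then $N,n\models\varphi$ for every pointed model $(N,n)$; and if $M,m\models\varphi$ then $\fullloop\models\varphi$ — the latter because $\fullloop$ weakly simulates everything (checking the clauses: $(\mathrm{atom})$ holds since $\fullloop$ makes every $p$ true, $(\mathrm{back}')$ is discharged by the $\fullloop$ escape clause, $(\mathrm{forth}')$ is satisfied by looping at $\fullloop$). Then the main induction on $\varphi$ goes through: in the $\Diamond$-case's degenerate branch, $M,u\bisim\emptyloop$ gives $M,u\models\psi \Rightarrow \emptyloop\models\psi \Rightarrow$ (by the lemma) $\psi$ holds at every point, so I just need \emph{some} $R^{M'}$-successor of $s'$; if $s'$ has no successor, I instead observe $\Diamond\psi$ can be handled because... this is the delicate point. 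The dual reasoning handles $\Box$: if $M,s\models\Box\psi$ and $R^{M'}s'u'$, then $(\mathrm{back}')$ gives either a $Z$-matching $u$ with $R^M s u$ (so $M,u\models\psi$, apply IH) or $M',u'\bisim\fullloop$, in which case $\psi$ holds at $u'$ by the lemma since $\psi$ is satisfiable. \emph{The main obstacle} I anticipate is exactly the bookkeeping around the escape clauses in the degenerate $\Diamond$-case — making sure that ``$M,u\bisim\emptyloop$'' really does license moving to an arbitrary successor of $t'$, and that such a successor exists or is not needed; getting the lemma about $\emptyloop$ and $\fullloop$ stated in the precisely right form (exploiting that $\mono$ lacks $\bot$ and $\top$) is what makes the whole induction close without circularity.
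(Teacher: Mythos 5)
Your overall strategy --- induction on $\varphi$, together with a separate lemma pinning down which $\mono$-formulas hold at $\emptyloop$ and $\fullloop$ --- is the right one, and your treatment of the $\Box$-case and of $\fullloop$ is essentially correct once the lemma is proved by a direct induction (every atom is true at $\fullloop$ and its unique successor is itself, so every $\mono$-formula holds there; no appeal to weak simulations, hence no circularity). But there is a genuine gap exactly where you flag ``the delicate point'': your handling of the degenerate branch of the $\Diamond$-case trails off into ``\ldots'', and the route you sketch there does not close. You try to argue that $M,u\bisim\emptyloop$ and $M,u\models\psi$ give $\emptyloop\models\psi$, hence $\psi$ holds everywhere, hence any successor of $s'$ would witness $\Diamond\psi$ --- and then you are stuck when $s'$ has no successor. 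No amount of bookkeeping fixes this: a formula that is true everywhere is of no help in verifying $\Diamond\psi$ at a deadlocked point.

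The missing observation is that the lemma should be stated in a stronger, and in fact simpler, form: $\emptyloop$ satisfies \emph{no} formula of $\mono$ whatsoever. This is a one-line induction: the valuation of $\emptyloop$ is empty, so all atoms fail; $\wedge$ and $\vee$ are immediate; and since the unique successor of $\emptyloop$ is $\emptyloop$ itself, both $\Diamond\psi$ and $\Box\psi$ reduce to $\psi$ at $\emptyloop$. (This is exactly where the absence of $\top$ is used; dually, $\fullloop\models\varphi$ for all $\varphi\in\mono$ uses the absence of $\bot$.) With the lemma in this form, the degenerate branch of $(\mathrm{forth}')$ is \emph{vacuous}: if $M,u\bisim\emptyloop$ then, since bisimulations preserve all modal formulas, $M,u\models\psi$ would give $\emptyloop\models\psi$, contradicting the lemma. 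So whenever a witness $u$ for $\Diamond\psi$ exists, the $(\mathrm{forth}')$ clause is forced into its second disjunct, producing a $Z$-matched successor $u'$ of $s'$, and the induction hypothesis applies directly; there is no need to find ``some successor of $s'$'' in the escape branch, and your worry about a deadlocked $s'$ evaporates. Your formulation of the lemma as ``$\emptyloop$ satisfies exactly the valid formulas'' is technically true only because both sets are empty, but stated that way it steers you toward the unusable argument; state it as ``$\emptyloop\models\varphi$ never holds'' and the induction closes.
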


In high level terms, the proof of Theorem~\ref{thm:main} proceeds as follows:
given a formula $\varphi\in\mono$, 
we show how to construct positive and negative examples $(E^+_{\varphi}, E^-_{\varphi})$ that $\varphi$ fits and which forms a \textit{duality} (a generalisation of the notion of \textit{splittings} in lattice theory \cite{McKenzie1972EquationalBA}) in the in the category of pointed models and weak simulations. More specifically, we show that every model of $\varphi$ weakly simulates some positive example in $E^+$ and that every non-model of $\varphi$ is weakly simulated by some negative example in $E^-$. 
It follows by Theorem~\ref{thm:preservation} that any $\mono$-formula that fits $E^+$ is implied by $\varphi$, while every formula that fits $E^-$ implies $\varphi$, showing that $(E^+_{\varphi},E^-_{\varphi})$ is a finite characterization of $\varphi$ w.r.t. $\mono$.

This proof technique was inspired by results in~\cite{AlexeCKT2011}, which 
established a similar connection between finite characterizations for 
GAV schema mappings (or, equivalently, unions of conjunctive queries) and homomorphism dualities (i.e. generalised splittings in the category of finite structures and homomorphisms. 

\section{Discussion}\label{sec:discussion}
Our construction, although effective, is non-elementary. For this reason, we cannot obtain from it an efficient exact learning algorithm.
On the other hand, it follows from the results in~\cite{dalmautencateCQ} that
$\posexcon$-formulas are polynomial-time exactly learnable with membership queries.
We leave it as future work to prove matching lower bounds for our construction, 
and to understand more
precisely which modal fragments admit polynomial-sized finite characterizations 
and/or are polynomial-time exactly learnable with membership queries.

Variants of Theorem~\ref{thm:main} can be obtained for $\negmono$ and, more generally,
for \emph{uniform} modal formulas, where certain propositional variables
only occur positive and others only negatively.

As we mentioned, our immediate motivation for this work came from a renewed interest in 
exact learnability in description logic. In particular, in~\cite{dalmautencateCQ}, exact learnability with membership queries is studied for the description logic $\mathcal{ELI}$. These results are extended to results on learning $\mathcal{ELI}$ concepts under DL-Lite ontologies (i.e. background theory) \cite{Funk2021ELr} and temporal instance queries formulated in fragments of linear-time logic (LTL) \cite{LTLWolter}. 
We expect that our proof of Theorem~\ref{thm:main} can be lifted to the poly-modal case without major changes. 

\bibliographystyle{aiml22}
\bibliography{aiml22}

\begin{thebibliography}{10}
\expandafter\ifx\csname url\endcsname\relax
  \def\url#1{\texttt{#1}}\fi
\expandafter\ifx\csname urlprefix\endcsname\relax\def\urlprefix{URL }\fi
\newcommand{\enquote}[1]{``#1''}

\bibitem{AlexeCKT2011}
Alexe, B., B.~t. Cate, P.~G. Kolaitis and W.-C. Tan, \emph{Characterizing
  schema mappings via data examples}, ACM Trans. Database Syst. \textbf{36}
  (2011), pp.~23:1--23:48.
\newline\urlprefix\url{http://doi.acm.org/10.1145/2043652.2043656}

\bibitem{AngluinQueriesConcepts}
Angluin, D., \emph{Queries and concept learning}, Mach. Learn. \textbf{2}
  (1988), p.~319–342.
\newline\urlprefix\url{https://doi.org/10.1023/A:1022821128753}

\bibitem{FoniokNT08}
Foniok, J., J.~Nesetril and C.~Tardif, \emph{Generalised dualities and maximal
  finite antichains in the homomorphism order of relational structures}, Eur.
  J. Comb. \textbf{29} (2008), pp.~881--899.

\bibitem{LTLWolter}
Fortin, M., B.~Konev, V.~Ryzhikov, Y.~Savateev, F.~Wolter and M.~Zakharyaschev,
  \emph{Unique characterisability and learnability of temporal instance
  queries} (2022).
\newline\urlprefix\url{https://arxiv.org/abs/2205.01651}

\bibitem{Funk2021ELr}
Funk, M., J.~C. Jung and C.~Lutz, \emph{Actively learning concepts and
  conjunctive queries under {ELr}-ontologies}, in: \emph{IJCAI}, 2021.

\bibitem{RaoulThesis}
Koudijs, R., \emph{Learning modal formulas via dualities}, MSc Thesis To
  appear, ILLC, University of Amsterdam (2022).

\bibitem{KurtoninaDeRijke}
Kurtonina, N. and M.~D. Rijke, \emph{Simulating without negation}, Journal of
  Logic and Computation \textbf{7} (1997), pp.~501--522.

\bibitem{MannilaR86}
Mannila, H. and K.-J. R{\"a}ih{\"a}, \emph{Test data for relational queries},
  in: \emph{PODS}, 1986, pp. 217--223.

\bibitem{McKenzie1972EquationalBA}
McKenzie, R., \emph{Equational bases and nonmodular lattice varieties},
  Transactions of the American Mathematical Society \textbf{174} (1972),
  pp.~1--43.

\bibitem{StaworkoW15}
Staworko, S. and P.~Wieczorek, \emph{Characterizing {XML} twig queries with
  examples}, in: M.~Arenas and M.~Ugarte, editors, \emph{18th International
  Conference on Database Theory, {ICDT} 2015, March 23-27, 2015, Brussels,
  Belgium},  LIPIcs  \textbf{31} (2015), pp. 144--160.
\newline\urlprefix\url{https://doi.org/10.4230/LIPIcs.ICDT.2015.144}

\bibitem{dalmautencateCQ}
ten Cate, B. and V.~Dalmau, \emph{{Conjunctive Queries: Unique
  Characterizations and Exact Learnability}}, in: \emph{Proceedings of ICDT
  2021}, 2021, pp. 9:1--9:24.

\end{thebibliography}

\end{document}